\DeclareMathAlphabet{\mathcalligra}{T1}{calligra}{m}{n}
\DeclareFontShape{T1}{calligra}{m}{n}{<->s*[2.2]callig15}{}
\newcommand{\R}{\ensuremath{\mathbb{R}}}
\newcommand{\RP}{\ensuremath{\mathbb{R}_+}}
\newcommand{\scrp}{\ensuremath{\mathcal{p}}}
\newcommand{\scrq}{\ensuremath{\mathcal{q}}}
\newcommand{\scrm}{\ensuremath{\mathcal{m}}}
\newcommand{\scrv}{\ensuremath{\mathcal{v}}}
\newcommand{\scrw}{\ensuremath{\mathcal{w}}}
\newcommand{\scrP}{\ensuremath{\mathscr{P}}}
\newcommand{\scrB}{\ensuremath{\mathscr{B}}}
\newcommand{\scrO}{\ensuremath{\mathscr{O}}}
\newcommand{\scrV}{\ensuremath{\mathscr{V}}}
\newcommand{\scrM}{\ensuremath{\mathscr{M}}}
\newcommand{\scrI}{\ensuremath{\mathscr{I}}}
\newcommand{\Cone}{\ensuremath{\mathrm{Cone} \,}}
\newcommand{\coneP}{\ensuremath{\textnormal{Cone\,(\scrP)}}}
\newcommand{\theory}{\ensuremath{(\Sigma,\scrP)}}
\newcommand{\deltam}{\ensuremath{\Delta\scrm}}
\newcommand{\process}{\ensuremath{(\Delta\scrm,\scrq)}}
\newcommand{\MSigma}{\ensuremath{\scrM(\Sigma)}}
\newcommand{\MSigmaZ}{\ensuremath{\scrM^{\circ}(\Sigma)}}
\newcommand{\MSigmaPl}{\ensuremath{\scrM_{+}(\Sigma)}}
\newcommand{\MSigmaPlOne}{\ensuremath{\scrM_{+}^1(\Sigma)}}
\newcommand{\VSigma}{\ensuremath{\scrV(\Sigma)}}
\theoremstyle{plain}
\newtheorem{theorem}{Theorem}[section]
\newtheorem{lemma}[theorem]{Lemma}
\newtheorem{proposition}[theorem]{Proposition}
\theoremstyle{definition}
\newtheorem{definition}[theorem]{Definition}
\newtheorem{example}[theorem]{Example}
\theoremstyle{remark}
\newtheorem{rem}[theorem]{Remark}
\newtheorem*{rationale}{Rationale}
\newcounter{property}
\newenvironment{property}[1][]{\refstepcounter{property}\par\medskip\noindent%
    \textbf{Property~\theproperty. #1} \rmfamily}{\medskip}
\begin{document}

\author{Martin Feinberg\thanks{The William G. Lowrie Department of Chemical \& Biomolecular Engineering and Department of Mathematics, The Ohio State University, 151 W. Woodruff Avenue, Columbus, OH 43210 USA.  E-mail: feinberg.14@osu.edu.}
\and Richard B. Lavine\thanks{Department of Mathematics, University of Rochester, Rochester, NY 14627 USA.  Email: rdlavine@frontiernet.net}}

\title{Entropy and Thermodynamic Temperature in Nonequilibrium Classical Thermodynamics as Immediate Consequences of the Hahn-Banach Theorem: I. Existence}

\date{\emph{\today}}
\maketitle
\begin{abstract} The Kelvin-Planck statement of the Second Law of Thermodynamics is a stricture on the nature of heat receipt by any body suffering a cyclic process. It makes no mention of temperature or of entropy.  Beginning with a Kelvin-Planck statement of the Second Law, we show that entropy and temperature---in particular, existence of functions that relate the local specific entropy and thermodynamic temperature to the local state in a material body---emerge immediately and simultaneously as consequences of the Hahn-Banach Theorem. Existence of such functions of state  requires no stipulation that their domains be restricted to equilibrium states. Further properties, including uniqueness, are  addressed in a companion paper.
\end{abstract}
\newpage
\tableofcontents
\newpage
\numberwithin{equation}{section}

\section{Introduction}
\label{sec:Intro}

There are several widely-accepted formulations of the Second Law in classical thermodynamics, some invoking notions of temperature and entropy, and at least one invoking neither of these explicitly. 

In particular, the so-called Kelvin-Planck Second Law is an elemental stricture on the nature of heat receipt by a body during the course of any cyclic process the body might experience. It says, in effect, that during the course of a cyclic process the body cannot merely receive heat from its exterior without also emitting heat to it (in a manner qualitatively distinguishable from that of the heat receipt\footnote{We shall be more precise about this later on. Planck \cite{planck_treatise} requires that heat exchange between the body and its exterior cannot, on the whole, simply amount to \emph{extraction} of heat from a single ``heat reservoir." }). The First Law then implies that, over the course of a cyclic process, the heat received by the body cannot be converted entirely into work; there must be some heat emission as well. In the Kelvin-Planck Second Law there is no explicit mention of entropy or of temperature, much less of a thermodynamic temperature scale.

As we indicated, other invocations of the Second Law are explicit in their use of a thermodynamic temperature scale and of an entropy. Indeed the opening paragraph of Gibbs's ``On the Equilibrium of Heterogeneous Substances" \cite{gibbs_ehs} invokes an inequality of the form
\begin{equation}
\label{eq:CDInGibbs}
\begin{bmatrix}
\textrm{The total entropy}\\  
\textrm{of the body at the}\\  
\textrm{end of the process}
\end{bmatrix}
\,-\,
\begin{bmatrix}
\textrm{The total entropy of the}\\   
\textrm{body at the beginning}\\  
 \textrm{of the process}
\end{bmatrix}
\quad\geq\quad
{\int\frac{dq}{T}}\ \biggr\rvert_{\;\textrm{process}}
\end{equation}
\noindent
\emph{``dq denoting the element of heat received from external sources and $T$ denoting the temperature of the part of the system receiving it."} (This interpretation of the right side of \eqref{eq:CDInGibbs} is taken from that same Gibbs paragraph.)

Much of modern classical thermodynamics takes as its starting point a Second Law of the form \eqref{eq:CDInGibbs}, usually called the \emph{Clausius-Duhem inequality}, deemed to obtain for any body suffering any process, \emph{even processes in which there is rapid heating or cooling, in which there are sharp temperature gradients, and in which there is rapid and severe deformation}.\footnote{In the context of classical continuum  physics, a paper by Coleman and Noll \cite{coleman1963thermodynamics} greatly influenced modern methodology for use of the Clausius-Duhem inequality to deduce constraints on material properties. For  examples involving chemically reacting mixtures, see \cite{bowen1968thermochemistry} and Chapter 13 in \cite{feinberg2019foundations}.} Neither at the start of the process nor at its end need the body be in equilibrium.

This raises some historical and, more importantly, conceptual questions. Entropy and a thermodynamic temperature scale are generally regarded to be derived entities, deduced from more fundamental statements of the Second Law (such as the Kelvin-Planck version) by means of brilliant arguments posited by the early thermodynamics pioneers. Those arguments, however, often invoke idealized slow reversible processes (e.g., Carnot cycles) in which the body suffering the process is always in  (or arbitrarily close to) a condition of  equilibrium.\footnote{Fermi and Feynman give renditions of the classical arguments in \cite{fermithermo} and \cite{feynman1965}.}  Because the classically derived notions of entropy and thermodynamic temperature rest upon arguments in which the only body-states visited are ones at or very close to equilibrium, it is reasonable to question whether these notions actually have rigorous logical extensions to non-equilibrium domains. 

	Gibbs seemed willing to embrace such extensions. A reading of Gibbs's interpretation of the right side of \eqref{eq:CDInGibbs} indicates that he had no reluctance to invoke a thermodynamic temperature scale in bodies having different \emph{local} temperatures in different parts. And in an earlier, less read article \cite{gibbs_geom}, Gibbs clearly felt free to attribute an entropy to a body that is not in equilibrium:
	
\begin{quotation}
When the body is not in a state of thermodynamic equilibrium, its state is not one of those which are represented by our surface. The body, however, as a whole has a certain volume, entropy, and energy, which are equal to the sums of the volumes, etc., of its parts. 
\end{quotation}
\noindent
Note in particular that Gibbs was not reluctant to assert the existence of a \emph{local} entropy within an un-equilibrated  body, its total entropy coming from a summing process.

	Yet it is not easy to trace a clear path from the equilibrium arguments for entropy and thermodynamic temperature posited by the early pioneers to the non-equilibrium entropy and temperature invoked by Gibbs. Even less evident is a precise line of argument that begins with the pioneers and terminates with the free-wheeling modern use of local entropy and thermodynamic temperature in the Clausius-Duhem inequality, in particular when it is applied to bodies experiencing rapid, non-uniform heat transfer and deformation.
	
	Our aim is to connect, in a precise way, an elemental Kelvin-Planck statement of the Second Law to  the existence and properties of a thermodynamic temperature scale and an entropy scale, both viewed as functions of the local material state, that together satisfy the requirements of the Clausius-Duhem inequality (as it is invoked in modern classical physics\footnote{When we refer to classical or continuum physics, we mean that part of physics that embraces subjects such as fluid mechanics, heat transfer, elasticity theory, amd the theory of diffusive and reacting mixtures, in which bodies are regarded as continuous media.}) for all processes that material bodies under consideration are deemed to admit. The mathematical ideas we use, principally from functional analysis, were not available to the earliest pioneers of classical thermodynamics, nor were they available to Gibbs.\footnote{Henri Lebesgue was born in the year that \emph{Equilibrium of Heterogeneous Substances} \cite{gibbs_ehs} was first published.}	
	Our primary working tool is the Hahn-Banach Theorem, in particular a version that ensures that two non-empty disjoint closed convex sets in a locally convex topological vector space, at least one of them compact, can be strictly separated by a hyperplane \cite{brezis2011functional,	choquet1969lectures,rudin_functional_1991,
simon2011convexity}. Along the way, the Hahn-Banach Theorem will have the additional benefit of imparting to thermodynamics an intuitive geometric flavor, different in substance and setting from the geometric one pioneered by Gibbs in \cite{gibbs_graphical,gibbs_geom}.
	
\section{Some Background}This article and its companion \cite{feinberg-lavineEntropy2} constitute a major amplification of two much earlier ones by us, both drawing on the Hahn-Banach Theorem heavily. The first \cite{feinberg1983thermodynamics}, published in 1983,  was an extensive discussion of how the Hahn-Banach Theorem serves to connect a suitably formulated version of the Kelvin-Planck Second Law to the existence and properties (including uniqueness) of a thermodynamic temperature scale that conforms to the so-called \emph{Clausius inequality}---that is, to the Clausius-Duhem inequality \emph{restricted to cyclic processes}. For cyclic processes, the left side of \eqref{eq:CDInGibbs} reduces to  zero, so there was no involvement of entropy.

	The second article \cite{feinberg1986foundations} was published originally in 1984 as an appendix in \cite{truesdell_rational_1984} and soon after in a collection \cite{serrin_newperspectives} of short essays about the foundations of thermodynamics. That article indicated how, beginning with a slightly stronger version of a Kelvin-Planck Second Law, the Hahn-Banach Theorem delivers \emph{simultaneously} both a thermodynamic temperature and an entropy satisfying the requirements of the Clausius-Duhem inequality, \emph{not restricted to cyclic processes}. 	
	
	Although \cite{feinberg1986foundations} contained a Hahn-Banach proof of the \emph{equivalence} of the Kelvin-Planck Second Law with the existence of thermodynamic-temperature and entropy functions of state suited to the Clausius-Duhem inequality, several other theorems (including two   about  uniqueness) were merely stated without proofs. Those proofs we said would be forthcoming in a fuller article.  Moreover, we promised a more compelling presentation of the existence argument, in which certain presumptions about the structure of the putative set of thermodynamical processes would be substantially weakened. This  article and its companion  are intended to fulfill those promises. The weakened, and more natural, assumptions about the structure of the process-set have required a deeper analysis, much of it deferred to the appendix of this article. 
\bigskip

\begin{rem}
The 1986  volume \cite{serrin_newperspectives}, in which \cite{feinberg1986foundations} appears, contains a wealth of chapters by different authors devoted to the study of the mathematical foundations of non-equilibrium classical thermodynamics. The same is true of \cite{truesdell_rational_1984}.  Even beyond those, there are many schools of thought about how classical thermodynamics might be  extended to non-equilibrium settings. These are surveyed amply and critically in the 2008 book by Lebon, Jou, and Casa-V\'{a}squez \cite{lebon2008understanding} (although some work contained in \cite{serrin_newperspectives, truesdell_rational_1984} and related articles  escaped the book's notice). Readers of this article might want to see very different work by Lieb and Yngvason, which in 1999 \cite{lieb1999physics} began as an exploration of the construction of classical entropy for bodies in equilibrium and  then turned in 2013  to questions about the extent to which the same could be done for un-equilibrated bodies \cite{lieb2013entropy}.  For a recent summary of some of their work see \cite{yngvason2022direct}. See also an article by Kammerlander and Renner  \cite{kammerlander2020tangible}.
\end{rem}

\begin{rem} 
To a great extent, discussions with James Serrin in the late 1970s and early 1980s, in particular his formulations of the Second Law in terms of a heat accumulation function,
 provided inspiration for our work (although not our reliance on the Hahn-Banach theorem). Serrin's views at the time are captured in \cite{serrin1975foundations,serrin1978concepts,serrin1979conceptual,serrin1986outline}.
 
\end{rem}

\begin{rem} As in our earlier articles, we want to call particular attention to work \cite{vsilhavy1980measures,silhavy1980measures} by Miroslav \v{S}ilhav\'{y},\footnote{See also \v{S}ilhav\'{y}'s book \cite{silhavy1997mechanics}.} who realized independently and at about the same time that Hahn-Banach separation theorems, taken with the Kelvin-Planck Second Law, might provide a basis for existence of a thermodynamic temperature scale consistent with the \emph{cyclic-process} Clausius inequality.  In \cite{silhavy1980measures} \v{S}ilhav\'{y} viewed the thermodynamic temperature scale to be a function having as its domain a pre-supposed empirical temperature scale. The most apt comparison  to our work is with some preliminary notes \cite{feinberg-lavine_PrelimNotes} we wrote in 1978 for James Serrin. There, we also viewed a Clausius-inequality temperature scale to be a function having as its domain a pre-supposed empirical temperature scale, and we too used Hahn-Banach separation theorem arguments to demonstrate how the existence of such a Clausius-inequality temperature scale derives immediately from, and is equivalent to, the Kelvin-Planck Second Law.

Our subsequent published article \cite{feinberg1983thermodynamics} on the Clausius inequality  was much more ambitious. There, we chose not to pre-suppose an empirical temperature scale, carrying a pre-ordained notion of ``hotness" and ``hotter than."  Rather, we regarded the desired Clausius-inequality temperature scale to be a ``function of state," the state domain depending on the material under consideration.\footnote{For example, the local state of a gas might be  specified by the local pressure and the local specific volume. For other examples see \S\ref{sec:StateSpaces}.}  In this way, we could not only establish, via the Hahn-Banach Theorem, the equivalence of the Kelvin-Planck Second Law with the existence of a temperature scale satisfying the Clausius inequality, we could also tie relative values of that temperature to a ``hotter than" relation on the set of states, a relation deriving  solely from processes the material is deemed to admit. This is the position taken in \cite{feinberg1986foundations} and here, where the entropy density, like the thermodynamic temperature scale, is a Hahn-Banach-derived function of the local material state.\footnote{After establishing a thermodynamic temperature function (of the empirical temperature), one suited to the cyclic-process Clausius inequality, \v{S}ilhav\'{y} \cite{silhavy1980measures}, went on to construct an entropy, but that entropy is an attribute of an entire body, not the entropy-density function of the local state established here via the Hahn-Banach Theorem.}
\end{rem}

\section{Thermodynamical Theories}

To a great extent modern classical thermodynamics manifests itself as a collection of thermodynamical theories tailored to particular materials, these various theories sharing common premises and common methodologies. There are, for example, thermodynamical theories of elastic materials, of gases, of viscous fluids, of diffusive reacting mixtures, and so on. Each such theory presumably carries with it versions of the First and Second Laws, rendered concrete and precise within the context of the specific class of materials under study. 

	With this viewpoint in mind, we regard the theorems contained in this article and its companion to provide something like a ``meta-thermodynamics" that sheds an overarching light on the structure of specific thermodynamical theories. In particular, almost all of the theorems contained here assert that a theory has Property A (usually a statement about the nature of heat transfer between bodies  and their exteriors in processes the theory admits) \emph{if and only if} it has Property B (usually a statement about entropy and thermodynamic temperature). The deeper and more difficult of those implications always derives from the Hahn-Banach Theorem.
	
	We will regard a thermodynamical theory to be a mathematical object consisting of two sets: (i) a \emph{state space} $\Sigma$ that characterizes the set of (local) states that might be exhibited within a material body embraced by the theory and (ii) a set \scrP\, of \emph{processes} that  abstracts the essential features of physical processes that such bodies are deemed to admit.   Taken together, these two sets will, for us, serve to constitute an instance $(\Sigma,\scrP)$ of a  \emph{thermodynamical theory}.
	
	In this section and the next we will use terms such as \emph{body}, \emph{material}, \emph{material point}, and \emph{physical process}, but only in an informal way to guide thinking about the two sets $\Sigma$ and \scrP\  that constitute a thermodynamical theory or to provide justification for the structure these sets are presumed to possess. Again, though, a thermodynamical theory $(\Sigma,\scrP)$ is a purely \emph{mathematical} object suited to precisely stated   questions and theorems.  In particular, we will be in a position to say what we mean by a \emph{Kelvin-Planck theory}---that is, a thermodynamical theory that complies with a precisely stated version of the Kelvin-Planck Second Law. And we will be a position to ask about circumstances under which  a particular thermodynamical theory $(\Sigma,\scrP)$ admits two functions of state---a specific-entropy $\eta:\Sigma\to\R$ and a thermodynamic temperature scale $T:\Sigma\to\R_+$ that together comply with the Clausius-Duhem inequality for all processes \scrP\  the theory contains.

\begin{rem} The mathematical objects and theorems contained here lend themselves to a variety of physical interpretations. At least at the outset, it will be helpful for the reader to think of a  thermodynamic theory \theory\   as a description of a particular material (e.g., carbon dioxide, water, rubber, a metal alloy, a diffusive reacting mixture). In this context, a specific-entropy function $\eta:\Sigma\to\R$ will have an interpretation as an attribute of a particular material---in the parlance of continuum physics, a ``constitutive function" for that material. Nevertheless, we intend the abstract idea of a thermodynamic theory to be broadly adaptable to a variety of circumstances and instances. 
\end{rem}

\subsection{State Spaces}\label{sec:StateSpaces}

Central to virtually all classical theories of material body behavior is the idea of ``functions of state" that serve to compute \emph{local} values of certain material attributes. Indeed, one of our aims is to establish, from the Kelvin-Planck Second Law, the existence of specific-entropy and thermodynamic-temperature functions, suited to the Clausius-Duhem inequality, that permit the calculation of the local specific entropy (entropy  per mass) and the local thermodynamic temperature once the local material ``state" is specified.

Just how the ``state of a material point" is specified will vary from one thermodynamical theory to another.\footnote{The idea of a material point is basic to classical physics, wherein reference is freely made to the density, velocity, stress tensor, temperature, or species concentrations at a point within a body. We will always regard a \emph{state} as an attribute of a material point within a body, not of the body as a whole. For the body as a whole we will refer to its \emph{condition}.} For a theory of a gas of fixed composition it might be supposed that the local state is captured completely by specification of the pair $(p,v)$, where $p$ is the local pressure and $v$ is the local specific volume (the reciprocal of the density). For an elastic material, it might be supposed that the local state is captured by the pair $(u,F)$, where $u$ is the local specific internal energy (internal energy per mass) and $F$ is the local deformation gradient. For a reacting and diffusive mixture having $n$ chemical species, the local state might be described by the vector $[c_1,c_2,\dots,c_n,\theta] \in \mathbb{R}^{n+1}$, where $c_i$ is the local molar concentration of the $i^{th}$ species and $\theta$ is the local temperature in degrees Fahrenheit.

	In any case, we shall take for granted that a thermodynamical theory  has associated with it a \emph{state space} $\Sigma$, understood to be the set of local states that might be exhibited within a material body during processes the theory purports to describe. It will be presumed that $\Sigma$ carries with it a Hausdorff topology.

In fact, we will go further by supposing hereafter that $\Sigma$ is compact. This supposition will simplify the mathematics greatly, and in most instances it will be physically apt:  A well-grounded theory would suffer no loss from exclusion of processes that visit material states which are physically unreasonable. Excluded from consideration, for example, might be processes involving mass-densities so high as to be realized only in black holes or so low as to be inconsistent with the tenets of continuum models.

\begin{rem}\label{rem:LocallyCompact} When the state space is merely presumed to be locally compact, realization of the  objectives of this paper become more technically delicate, and certain theorems here become false without modification.  In Appendix E of \cite{feinberg1983thermodynamics} we showed how this might proceed when attention is restricted to solely to cyclic processes, with the aim of producing a thermodynamic temperature scale consistent with the Clausius inequality.
\end{rem}

\subsection{Processes}

	A process experienced by particular body can be described in a variety of ways, some highly picturesque, involving pulleys and pistons. For our purposes, however, there will be only two aspects of the process that need be considered: (i) the \emph{change of condition} of the body from the beginning of the process to its end and (ii) the \emph{heating measure} for the process, which is an overall accounting of the nature of heat receipt the body experiences during the course of the process. We will describe each of these separately. For us, a \emph{process} will be identified with specifications of both its change of condition and its heating measure.

\subsubsection{The change of condition for a process} Recall that members of $\Sigma$ are understood to be \emph{local} state descriptions---that is, candidates for describing the state of a material point within a body. If we consider a body at a fixed instant, its material points will be exhibited in various states of $\Sigma$. Although there might be just one state exhibited throughout the body (in which case the body is thermodynamically uniform), the distribution of states over the body could be far more diffuse. In any case, we shall need a device to describe that distribution for a particular body at a fixed instant.

By the (instantaneous) \emph{condition} of the body we mean a positive regular Borel measure on $\Sigma$, denoted here by  \scrm, interpreted in the following way: For each Borel set $\Lambda \subset \Sigma$, $\scrm(\Lambda)$ is the mass of that part of the body consisting of all material points in states contained in $\Lambda$. More colloquially, we can think of $\scrm(\Lambda)$ to be determined by excising from the body only material in states contained within $\Lambda$ and weighing that part of the  body so removed. Note that $\scrm(\Sigma)$ is the mass of the entire body. Note also that if a body of mass $M$ is thermodynamically uniform, with all material in state $\sigma$, then the body's condition is $M\delta_{\sigma}$, where $\delta_{\sigma}$ is the Dirac measure concentrated at $\sigma$.\footnote{The Dirac measure $\delta_{\sigma}$ is defined by the requirement that, for each Borel set $\Lambda \subset \Sigma$, $\delta_{\sigma}(\Lambda)$ is either $1$ or $0$ according to whether $\sigma$ is or is not a member of $\Lambda$.}

	Now consider a physical process suffered by a particular body, with both the body and the process presumably embraced by the thermodynamic theory under consideration. During the process, the body might experience rapid deformation and heat treanser, so that each material point within the body might present itself in a great variety of states as the process ensues. In particular, the body's final condition $\scrm_f$  might be very different from the body's initial condition $\scrm_i$. We associate with the process a \emph{change of condition}, $\Delta\scrm$ defined by
\begin{equation}
\Delta\scrm := \scrm_f - \scrm_i.
\end{equation}

Here $\Delta\scrm$ is understood to be a signed regular Borel measure on $\Sigma$, which is to say that $\Delta\scrm$ might take positive values on some Borel sets and negative values on others.\footnote{When the compact Hausdorff topology of $\Sigma$ is given by a metric, in particular in the almost universal case in which the state space is taken to be a subset of $\mathbb{R}^N$, every finite signed Borel measure is  already regular. See Chapter 12 in \cite{charalambos2013infinite}.} Note, however, that we always have
\begin{equation}
\label{eq:deltamSigIsZero}
\Delta\scrm(\Sigma) = \scrm_f(\Sigma) - \scrm_i(\Sigma) = 0,
\end{equation}
since each term on the right is the (conserved) total mass of the body suffering the process.

\subsubsection{The heating measure for a process} During the course of the physical process under consideration, the body suffering the process might experience deformation and nonuniform transfer of heat to and from its exterior. Indeed, at a given instant there might be heat receipt in some parts of the body and heat removal in other parts. It should be kept in mind that each material point can be expected to visit a variety of states in $\Sigma$ as time progresses.

	With the process we associate a \emph{heating measure} \scrq, which is a signed regular Borel measure on $\Sigma$ with the following interpretation: For each Borel set $\Lambda \subset \Sigma$, $\scrq(\Lambda)$ is the net amount of heat received over the course of the entire process (from the exterior of the body suffering the process) by material in states contained within $\Lambda$  at the time of heat receipt. In colloquial terms, imagine viewing the evolving process through glasses that filter out material not in states contained in $\Lambda$; some material might disappear and then reappear. The net heat received, over the entire process, by the visible material (from the exterior of the entire body) is  $\scrq(\Lambda)$.
	
\subsubsection[A robust example]{Example: Change of condition and heating measure derived from a more concrete process description}\label{subsubsec:ContinuumMechEx} Because the abstract idea of a process's change of condition and heating measure will be important hereafter,\footnote{The idea of expressing the condition of a body as a measure on a state space was inspired by a paper by Noll \cite{noll1970certain}. As far as we can recall from private conversations in 1978, James Serrin had  invented what we call a heating measure, but on a one-dimensional ``hotness manifold." He later abandoned that in published works, as he came to favor what he called a heat ``accumulation function" on the hotness manifold \cite{serrin1979conceptual,serrin1986outline}. We do not take hotness as a primitive notion.} we will indicate how these can be calculated from a somewhat more tangible description of a process. With the process (having a compact metric space as the state space $\Sigma$) we associate:

\begin{enumerate}[(i)]

\item a body \scrB\  that experiences the process. Here we regard \scrB\  to be a set (of \emph{material points}), taken with a $\sigma$-algebra of subsets of \scrB, called the \emph{parts} of \scrB.  We presume that \scrB\  comes equipped with a positive \emph{mass measure} $\mu$ defined on its parts: for each part $P \in \scrB$, $\mu(P)$ is the \emph{mass} of part $P$. 

\item a closed interval of the real line  $\scrI := [t_i,t_f]$, identified with the time interval over which the process transpires.

\item a measurable\footnote{Here $\Sigma$ is understood to carry the Borel $\sigma$-algebra.}  function $\hat{\sigma}: \scrB \times \scrI \to \Sigma$, with $\hat{\sigma}(X,t)$ interpreted as the state of material point $X$ at instant $t$.

\item a real-valued signed measure $h$  on $\scrB \times \scrI$, interpreted as follows: For each part $P \subset \scrB$ and each Lebesgue-measurable set $J \subset \scrI$, $h(P,J)$ is the net amount of heat received by part $P$ from the exterior of the body during instants contained in $J$.
\end{enumerate}

	For a process described this way, construction of the \emph{heating measure} \scrq\  proceeds as follows: For each Borel set $\Lambda \subset \Sigma$,
\begin{equation}
\scrq(\Lambda) := h(\hat{\sigma}^{-1}(\Lambda)).
\end{equation}
To construct the change of condition for the process we begin by defining the initial and final state assignments to material points:
\begin{equation}
\hat{\sigma}_i(\cdot) := \hat{\sigma}(\cdot,t_i) \quad \text{and} \quad \hat{\sigma}_f(\cdot) := \hat{\sigma}(\cdot,t_f).
\end{equation}
The \emph{initial condition} and \emph{final condition} of body \scrB\  are then defined by the requirement that, for each Borel set $\Lambda \subset \Sigma$,
\begin{equation}
\scrm_i(\Lambda) = \mu(\hat{\sigma}_i^{-1}(\Lambda)) \quad \text{and}\quad \scrm_f(\Lambda) = \mu(\hat{\sigma}_f^{-1}(\Lambda)).
\end{equation} 
The \emph{change of condition} for the process is then  given by
\begin{equation}
\Delta\scrm := \scrm_f - \scrm_i.
\end{equation}

\subsubsection{The set of processes and some of its properties}\label{subsubsec:ProcProps} In a theory with state space $\Sigma$, a process will be regarded to be a pair \process, where $\Delta \scrm$ is the change of condition for the process and \scrq\  is its heating measure. We can regard both of these as members of $\scrM(\Sigma)$, the vector space of signed regular Borel measures on $\Sigma$. In fact, from  \eqref{eq:deltamSigIsZero} it follows that \deltam\  is always a member of the linear subspace  $\MSigmaZ \subset \MSigma$ defined by
\begin{equation}
\MSigmaZ := \{\nu \in \MSigma\  :\  \nu(\Sigma) = 0\}.
\end{equation}
Thus we can regard a process \scrp\, = \process\  to be a member of the vector space
\begin{equation}
\scrV(\Sigma) := \MSigmaZ \oplus \MSigma. 
\end{equation}
Hereafter it will be understood that \MSigma\  carries the weak-star topology,\footnote{\label{ft:WeakStarDef}The weak-star topology on \MSigma\  is its coarsest topology such that, for every continuous function $\varphi:\Sigma \to \mathbb{R}$, the map 
\begin{equation}
\scrv \in \MSigma \to \int_\Sigma \varphi\, d \scrv \nonumber
\end{equation} 
is continuous. Then $\scrv_0 \in \MSigma$ is in the weak-star closure of a subset $S \subset \MSigma$ if, for every finite sequence $\varphi_1, \varphi_2,\dots,\varphi_n$ in $\mathrm{C}(\Sigma,\mathbb{R})$ and every $\varepsilon > 0$, there is a $\scrv \in S$ such that $|\int_{\Sigma}\varphi_j d \scrv - \int_{\Sigma}\varphi_j d \scrv_0\,| < \varepsilon,\ j=1,\dots,n$. Unlike the norm topology, the weak-star topology on \MSigma\ reflects the topology of $\Sigma$. For example, if $\sigma_i \to \sigma$ in $\Sigma$  as $i \to \infty$, then $\delta_{\sigma_i} \to \delta_{\sigma}$ with respect to the weak-star topology in \MSigma, while $\|\delta_{\sigma_i} - \delta_{\sigma}\| =2$ for each $i$.}
 that \MSigmaZ\  carries the topology it inherits as a subset of \MSigma, and that \VSigma\  carries the resulting product topology. For a set $X \in \VSigma$ we denote by cl\,$(X)$ its closure.
 
 	For a thermodynamic theory with state space $\Sigma$, the \emph{set of processes}, \mbox{$\scrP\  \subset  \VSigma$}, will be understood to consist of members of \VSigma\  that correspond to physical processes deemed to be admitted by material bodies in circumstances the theory purports to embrace.  Physical considerations suggest that, for any reasonable theory, the set \scrP\  should carry a certain structure, in particular that it should share at least some of the attributes of a convex cone in \VSigma. Recall that \scrP\  would be a convex cone were it to have both of the following properties:
\begin{enumerate}[(i)]
\item For each \scrp\  in \scrP\  and each non-negative number $\alpha$, $\alpha\scrp$ is a member of \scrP. 
\item For all \scrp\ and  $\scrp^*$ in \scrP, $\scrp + \scrp^*$ is a member of \scrP.
\end{enumerate}

	With respect to (i), it is not difficult to argue on physical grounds that that the inclusion will be satisfied \emph{so long as $\alpha$ is a non-negative integer}: If \scrp = \process\  is a physical process suffered by a body \scrB, then for any positive integer $n$, we can simultaneously execute the same process on $n$ copies of \scrB, copies that are not in thermal communication. The $n$ bodies, viewed as a single body, will have suffered a physical process for which the change of condition is $n\deltam$ and the heating measure is $n\scrq$. Thus, $n\scrp = (n\deltam, n\scrq)$ is a member of \scrP, corresponding to the physical $n$-body process described.
	
	Similarly, we can expect on physical grounds that the inclusion in (ii) will be satisfied so long as \scrp\  = \process\  and $\scrp^* =(\deltam^*,\scrq^*)$ correspond to two physical processes \emph{having the same temporal duration}: If these physical processes are suffered by bodies \scrB\  and $\scrB^*$, then the two processes can be executed simultaneously, with \scrB\  and $\scrB^*$ thermally isolated from one another, perhaps by large physical distance. This simultaneous execution can be viewed to be another physical process, suffered by the body composed of  \scrB\  and $\scrB^*$, having change of condition \mbox{$\deltam + \deltam^*$} and heating measure $\scrq + \scrq*$. In this case, the new physical process would have a representation in \VSigma\ (and  in \scrP) given by $\scrp + \scrp^*$. 
	
\medskip

	These considerations tell us that, in a reasonable theory, the process set \scrP\ can be expected to have some natural structure, including features that are \emph{suggestive} of a convex cone in \VSigma. In fact, in \cite{feinberg1986foundations} we \emph{assumed} that \scrP\ is a convex cone. Here we make no such assumption.
	
	We defer to the Appendix  a far more nuanced discussion of the structure that we will suppose \scrP\  possesses. By $\Cone(\scrP)$ we mean the set in \VSigma\ defined by
\begin{equation}
\Cone(\scrP) := \{\alpha\scrp \in \VSigma:\  \scrp \in \scrP,  \alpha \geq 0\}. 
\end{equation}
Based on a few plausible physical assumptions, we argue in the Appendix that, in a reasonable theory, the set
\begin{equation}
\hat{\scrP} := \textrm{cl}\,(\Cone(\scrP))
\end{equation}	
should not only be a closed cone in \VSigma, it should also be \emph{convex}.  This we will take for granted hereafter.

\subsection{Definition of a Thermodynamical Theory}
For the record, we posit the following definition:

\begin{samepage}
\begin{definition}\label{def:ThermTheory} A \textbf{thermodynamical theory} consists of a (compact) Hausdorff set $\Sigma$, called the \emph{state space} of the theory, and  a set $\scrP \subset \VSigma$ such that 
\begin{equation}
\hat{\scrP} := \textrm{cl}\,(\Cone(\scrP))
\end{equation}
is convex. Elements of \scrP\  are the \emph{processes} of the theory.
\end{definition}
\end{samepage}

\begin{rem} The definition is formulated in such a way as to remind the reader of our presumption that $\Sigma$\ is compact. Recall Remark \ref{rem:LocallyCompact}. 
\end{rem}

\section{Kelvin-Planck Theories}

	In this section we will make precise what we mean by a \emph{Kelvin-Planck theory}---that is, a thermodynamical theory that respects a form of the Kelvin-Planck Second Law. We want to capture the following idea: In every cyclic process in which the body suffering the process experiences a  heat absorption from the body's exterior, there must also be heat emission to the exterior, the emission being qualitatively different from the absorption. If there were there no heat emission, the process would be perfectly efficient, for by the First Law the heat absorbed would be converted entirely into work. 
	
	By a cyclic process in the thermodynamical theory \theory\ we will mean a process in which the condition of the body at the end of the process is the same as it was at its beginning. That is, \emph{a cyclic process $\scrp = \process$ is a process such that the change of condition $\deltam$ is $0$}. 
	
	Consider a cyclic process $\scrp^* := (0,\scrq^*)$ with $\scrq^* \neq0$. Recall that if $\Lambda \subset \Sigma$ is a Borel set of states, then $\scrq^*(\Lambda)$ is interpreted to be the net amount of heat absorbed during the course of the entire process by material while in states contained in $\Lambda$. If $\scrq^*$ is a \emph{non-negative} Borel measure---that is, one that takes non-negative values on \emph{every} Borel set, then there is no Borel set of states that, for the process, can be associated with net heat \emph{emission}. Moreover, by supposition  $\scrq^*$ is not the zero measure, so there is at least one Borel set on which  $\scrq^*$ is positive, corresponding to heat \emph{absorption}.

	For these reasons, \emph{when $\scrq^* \neq 0$ is a non-negative measure, we will regard the cyclic process $\scrp^* := (0,\scrq^*)$ to be inconsistent with the spirit of the Kelvin-Planck Second Law.} 	For the thermodynamical theory \theory\ we denote by $\scrM_+(\Sigma)$ the set of non-negative regular Borel measures on $\Sigma$, and we also let
\begin{equation}
(0,\scrM_+(\Sigma)) := \{ (0,\scrv) \in \VSigma : \scrv \in \scrM_+(\Sigma)\}. \nonumber
\end{equation}
Thus, for a thermodynamic theory \theory\ we might regard the requirement 
\begin{equation}
\label{eq:KPver1}
\scrP\, \cap\, (0,\scrM_+(\Sigma))\; \textrm{is at most}\; (0,0)
\end{equation}
to be a full embodiment of the Kelvin-Planck Second Law. Or, if we want to assert that a nonzero element of  $(0,\scrM_+(\Sigma))$   cannot even be approximated by the theory's processes, then we might strengthen \eqref{eq:KPver1} by requiring that
\begin{equation}
\label{eq:KPver2}
\textnormal{cl}\,(\scrP)\, \cap\, (0,\scrM_+(\Sigma))\; \textrm{is at most}\; (0,0).
\end{equation}

	However, two examples will reveal a sense in which even \eqref{eq:KPver2} falls a little short of capturing the Kelvin-Planck stricture against an approach to perfect conversion of heat into work in cyclic processes. The examples will indicate why we prefer to express the Kelvin-Planck Second Law in terms of a requirement that is somewhat stronger than \eqref{eq:KPver2}.

	Each example will be in the form of a toy thermodynamic theory in which the state space $\Sigma$ is identified with the real interval $[0,1]$. Recall that, for $x \in \Sigma$, $\delta_x$ denotes the Dirac measure at $x$. That is, if $\Lambda \subset \Sigma$ is a Borel set then $\delta_x(\Lambda) = 1$ if $x$ is in $\Lambda$ and is zero otherwise.

\begin{example}[\emph{A sequence of cyclic processes with small fixed heat emission but unbounded heat receipt}]
\label{ExampA}
 Consider a thermodynamic theory \theory, in which \scrP\  contains the sequence of cyclic processes 
\begin{equation}
\label{eq:BadProcessSeqA}
 \{(0,n\delta_1 - \delta_0)\,: n = 1,2,...\}.
\end{equation} 
Note that in each process of the sequence there is heat absorbed (by material in state $1$) and heat emitted (by material in state $0$). Thus, no process of the sequence is a member of the forbidden set $(0,\MSigmaPl)$, nor does the sequence converge to any nonzero member of the forbidden set. For this reason, a putative assertion of the Kelvin-Planck Second Law in the form \eqref{eq:KPver2} would not preclude for the theory \theory\  the presence of the sequence  \eqref{eq:BadProcessSeqA} in \scrP.

Nevertheless, the sequence contains cyclic processes \emph{that come arbitrarily close to having perfect efficiency as $n$ increases}: In each process, the heat \emph{absorbed} (all at state $1$) is $n$, while the work done (equal, in a cyclic process, to the \emph{net} amount of heat received) is $n-1$. The efficiency, then, is $\frac{n-1}{n}$, which approaches $1$ as $n$ gets large. Although members of the sequence \eqref{eq:BadProcessSeqA} do not converge to a member of the forbidden set, they do come close to \emph{aligning} in the vector space \VSigma\  with the forbidden element $(0,\delta_1)$.

	Such an arbitrarily close approach to perfect efficiency would seem to violate the spirit of the Kelvin-Planck Second Law. The example reveals a sense in which the condition expressed by \eqref{eq:KPver2} is not a fully suitable reflection of that spirit. 
\end{example}
	
\begin{example}[\emph{A sequence of almost-cyclic processes, each with heat receipt but no heat emission}]
\label{ExampB}
 Consider a thermodynamic theory \theory, in which \scrP\  contains the sequence of  processes 
\begin{equation}\label{eq:BadProcessSeq2}
\{(\delta_{1/n} - \delta_0,n\delta_{1/2})\,: n = 1,2,...\}.
\end{equation} 
Note that, in each process of the sequence, the heating measure indicates no heat emission, only (unbounded) heat absorption, entirely at state $\frac{1}{2}$. Still, no process of the sequence  constitutes a violation of a Kelvin-Planck-type Second Law, as no process is cyclic. Nevertheless, as  $n$ increases the change of condition approaches $0$ while the heat absorption becomes unbounded. Although the sequence does not converge to any member of the forbidden set $(0,\MSigmaPl)$, its processes nevertheless violate the Kelvin-Planck spirit, for as $n$ increases they increasingly resemble cyclic processes with (large) heat absorption but no heat emission. 

	Here, as in Example \ref{ExampA}, a codification of the Kelvin-Planck Second Law in the form  \eqref{eq:KPver2}  does not suffice to preclude the presence in \scrP\  of a troubling process sequence, in this case  \eqref{eq:BadProcessSeq2}. 
\end{example}
\medskip	

	Stated informally, the difficulty in both examples is that, while neither sequence converges to an element of the forbidden set $(0,\scrM_+(\Sigma))$, members of each sequence come arbitrarily close to pointing along a ``forbidden direction"	in the vector space \VSigma. 
	
	For a thermodynamic theory \theory, we will identify the \emph{direction} of a process $\scrp \in \scrP$ with the half-line 
\begin{equation}
\{\alpha \scrp \in \VSigma : \alpha \geq 0\}.
\end{equation}
Note that $\Cone(\scrP)$, given as before by
\begin{equation}
\Cone(\scrP) := \{\alpha\scrp \in \VSigma :\  \scrp \in \scrP,  \alpha \geq 0\}, 
\end{equation}
is the set of all directions generated by members of \scrP. The condition
\begin{equation}
\label{eq:KPver3}
\textnormal{cl}\,(\Cone(\scrP))\ \cap\ (0,\MSigmaPl) = (0,0)
\end{equation} 
then says in effect that no nonzero element of the forbidden set (0,\MSigmaPl) can be approximated by vectors  of \VSigma\ having directions associated with members of \scrP. 
\medskip

\begin{rem}[\emph{Examples \ref{ExampA} and \ref{ExampB} reconsidered}] Although the problematic thermodynamical theories considered in Examples \ref{ExampA} and \ref{ExampB} were not precluded by the putative Kelvin-Planck Second Law in the form \eqref{eq:KPver2}, they are precluded by the strengthened condition \eqref{eq:KPver3}. In the case of Example \ref{ExampA} the sequence in Cone\,(\scrP)
\begin{equation}\nonumber
 \{(0,\delta_1 - \frac{1}{n}\delta_0)\,: n = 1,2,...\}\end{equation} 
 converges to $(0,\delta_1)$. In the case of Example \ref{ExampB} the sequence in Cone\,(\scrP)
\begin{equation}\nonumber
 \{(\frac{1}{n}[\delta_{1/n} - \delta_0],\delta_{1/2})\,: n = 1,2,...\}
 \end{equation} 
converges to $(0,\delta_{1/2})$.
\end{rem}

\bigskip

	For these reasons, our preferred codification of the Kelvin-Planck Second Law will take the form \eqref{eq:KPver3} rather than \eqref{eq:KPver2}. \emph{Note that if \scrP\ is itself a cone then there is no difference between \eqref{eq:KPver3} and \eqref{eq:KPver2}.}	Recall that in Definition \ref{def:ThermTheory} (the definition of a thermodynamical theory \theory) we let
\begin{equation}
\label{eq:defCHat}
\hat{\scrP} := \textrm{cl}\,(\Cone(\scrP)).
\end{equation}

\begin{definition} A \textbf{Kelvin-Planck theory} is a thermodynamical theory \theory\ such that
\begin{equation}
\label{eq:KPSecLawDef}
\hat{\scrP}\ \cap\ (0,\MSigmaPl) = (0,0).
\end{equation}
\end{definition}

\section[The Hahn-Banach Theorem and the Second Law]{Hahn-Banach Equivalence of the Kelvin-Planck Second Law and the Existence of Entropy-Temperature Functions of State} 

The following theorem asserts that, for a thermodynamic theory, compliance with the Kelvin-Planck Second Law is \emph{equivalent} to the existence of two continuous functions of state, a specific-entropy function and a thermodynamic temperature scale that, taken together, satisfy the Clausius-Duhem inequality for all processes the theory contains. \emph{Entropy and thermodynamic temperature emerge simultaneously and almost immediately as a direct consequence of  the Hahn-Banach theorem.  There is no reliance at all on venerable thermodynamic conceptual machinery in the form of reversible processes, Carnot cycles,  heat baths, or even the idea of equilibrium.} 

	In the theorem statement $\textnormal{C}(\Sigma,\mathbb{R})$ denotes the set of real-valued continuous functions on $\Sigma$, and $\textnormal{C}(\Sigma,\mathbb{R}_+)$ is the set of positive-valued continuous functions. $\mathbb{R}_+$ denotes the set of strictly positive real numbers. 
\begin{samepage}
\begin{theorem}[Existence of Entropy and Thermodynamic Temperature]
\label{thm:ExistTempEnt}
 For a thermodynamic theory \theory\ the following are equivalent:
\begin{enumerate}[(i)]
\item \theory\ is a Kelvin-Planck theory.
\item There exist functions $\eta \in \textnormal{C}(\Sigma,\mathbb{R})$ and $T  \in \textnormal{C}(\Sigma,\mathbb{R}_+)$  such that
\begin{equation}\label{eq:MainThmCDIneq}
\int_{\Sigma}\eta\: d(\deltam)\  \geq\  \int_{\Sigma}\frac{d\scrq}{T}, \quad \forall\  (\deltam,\scrq) \in \scrP.
\end{equation}
\end{enumerate}
\end{theorem}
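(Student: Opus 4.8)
\section*{Proof proposal}

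The plan is to prove the two implications separately; the only substance is in $(i)\Rightarrow(ii)$, where the Hahn--Banach separation theorem enters essentially.

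First I would dispatch $(ii)\Rightarrow(i)$. Suppose $\eta\in\mathrm{C}(\Sigma,\R)$ and $T\in\mathrm{C}(\Sigma,\RP)$ satisfy \eqref{eq:MainThmCDIneq}, and write that inequality in the form $\int_\Sigma\eta\,d(\deltam)-\int_\Sigma(1/T)\,d\scrq\ \ge\ 0$. The left side is a linear functional of $(\deltam,\scrq)\in\VSigma$ that, by the definition of the weak-star topology on \MSigma, is weak-star continuous. It is nonnegative on \scrP, hence — by positive homogeneity — nonnegative on $\Cone(\scrP)$, hence — by continuity — nonnegative on $\scrPhat=\mathrm{cl}\,(\Cone(\scrP))$. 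Now take any $(0,\scrv)\in\scrPhat$ with $\scrv\in\MSigmaPl$; evaluating gives $-\int_\Sigma(1/T)\,d\scrv\ge 0$. Since $1/T$ is strictly positive and $\scrv$ is a nonnegative measure, this forces $\scrv=0$, so $\scrPhat\cap(0,\MSigmaPl)=(0,0)$ and \theory\ is a Kelvin-Planck theory.

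For $(i)\Rightarrow(ii)$ the idea is to strictly separate $\scrPhat$ from the compact convex set of normalized ``forbidden directions.'' Let $\MSigmaPlOne\subset\MSigmaPl$ be the probability measures on $\Sigma$ and put $K:=\{0\}\times\MSigmaPlOne\subset\VSigma$. Because $\Sigma$ is compact, $\MSigmaPlOne$ is nonempty, convex, and weak-star compact (Banach--Alaoglu, together with the Riesz identification of \MSigma\ with the dual of $\mathrm{C}(\Sigma,\R)$), so $K$ is a nonempty compact convex subset of \VSigma; and $\scrPhat$ is, by Definition~\ref{def:ThermTheory}, a closed convex set containing $0$, hence nonempty. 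They are disjoint: a common point would lie in $\scrPhat\cap(0,\MSigmaPl)$ and be distinct from $(0,0)$ (a probability measure is nonzero), contradicting the Kelvin-Planck hypothesis \eqref{eq:KPSecLawDef}. Since \VSigma\ is a locally convex Hausdorff space, the Hahn--Banach separation theorem yields a weak-star continuous linear functional $\ell$ on \VSigma\ and reals $a<b$ with $\ell\le a$ on $\scrPhat$ and $\ell\ge b$ on $K$. The dual of \VSigma\ (a product of two copies of \MSigma\ with the weak-star topology, the first factor restricted to the subspace \MSigmaZ) is represented via Riesz by pairs $(f,g)\in\mathrm{C}(\Sigma,\R)\times\mathrm{C}(\Sigma,\R)$ acting by $\ell(\mu,\nu)=\int_\Sigma f\,d\mu+\int_\Sigma g\,d\nu$, so fix such an $(f,g)$. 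Because $\scrPhat$ is a cone on which $\ell$ is bounded above (and $0\in\scrPhat$), rescaling by $t\to\infty$ forces $\ell\le 0$ there, so we may take $a=0$ and $b>0$. Testing $K$ on Dirac measures gives $g(\sigma)=\ell(0,\delta_\sigma)\ge b>0$ for every $\sigma\in\Sigma$, so $g\in\mathrm{C}(\Sigma,\RP)$; by compactness of $\Sigma$, $g$ is bounded below by a positive constant, so $T:=1/g\in\mathrm{C}(\Sigma,\RP)$. Setting $\eta:=-f\in\mathrm{C}(\Sigma,\R)$, the inequality $\ell\le 0$ on $\scrP\subset\Cone(\scrP)\subset\scrPhat$ reads $-\int_\Sigma\eta\,d(\deltam)+\int_\Sigma(1/T)\,d\scrq\le 0$ for all $(\deltam,\scrq)\in\scrP$, which is precisely \eqref{eq:MainThmCDIneq}.

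The main point requiring care is the choice of the compact set to separate from: the full forbidden set $(0,\MSigmaPl)$ is a cone and is not weak-star compact, so one separates instead from the normalized slice $K=\{0\}\times\MSigmaPlOne$, and one must verify both that $K$ is weak-star compact (this is exactly where compactness of $\Sigma$ is used) and that $K$ is genuinely disjoint from $\scrPhat$ (a probability measure is nonzero, hence not the permitted exception $(0,0)$). Everything else — representing weak-star continuous functionals on \VSigma\ by pairs of continuous functions, and exploiting the cone structure of $\scrPhat$ to pin the separation constant at $0$ and to force $g$ strictly positive — is routine once this set-up is fixed, and is the sense in which entropy and thermodynamic temperature ``emerge almost immediately'' from Hahn--Banach.
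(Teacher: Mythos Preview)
Your proof is correct and follows essentially the same route as the paper's own argument: you separate the closed convex cone $\hat{\scrP}$ from the compact convex slice $\{0\}\times\MSigmaPlOne$ via Hahn--Banach, use the cone structure to force the separating functional nonpositive on $\hat{\scrP}$, test on Dirac measures to get strict positivity of the second component, and then set $T=1/g$, $\eta=-f$; the converse is handled identically by extending \eqref{eq:MainThmCDIneq} to $\hat{\scrP}$ by homogeneity and continuity. The only differences are expository---you spell out Banach--Alaoglu for compactness of $K$ and the rescaling argument for $\ell\le 0$ on the cone, which the paper absorbs into its statement of Theorem~\ref{thm:HahnBanach}.
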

\end{samepage}
\bigskip

\bigskip
	Proof of Theorem \ref{thm:ExistTempEnt} will make use of some fairly straightforward adaptations of ideas (see, for example, \cite{choquet1969lectures}) in functional analysis  that were unavailable to the thermodynamics pioneers: First, \VSigma\  is a locally convex Hausdorff topological vector space. Second, the compactness of $\Sigma$ ensures that the convex set 
\begin{equation}
(0,\MSigmaPlOne) := \{(0,\scrv) \in \VSigma : \scrv \in \MSigmaPl, \scrv(\Sigma) = 1\} 
\end{equation}
is (weak-star) compact. Finally, if $f:\VSigma \to \mathbb{R}$ is a continuous linear function, then there exist functions $\alpha(\cdot)$ and $\beta(\cdot)$ in $\textnormal{C}(\Sigma,\mathbb{R})$ such that, for every $(\scrv,\scrw) \in \VSigma$,
\begin{equation}\label{eq:RepThm}
f(\scrv,\scrw) = \int_{\Sigma}\alpha\, d\scrv +  \int_{\Sigma}\beta\, d\scrw.
\end{equation}

	What follows is the version of the Hahn-Banach theorem that underlies almost all theorems in this article and its companion article, \cite{feinberg-lavineEntropy2}.
	
\begin{theorem}{\emph{\textbf{(Hahn-Banach)}}}\label{thm:HahnBanach} Let $V$ be a Hausdorff locally convex topological vector space, and let $A$ and $B$ be non-empty disjoint closed convex subsets of $V$, with $B$ compact. There is a  continuous linear function $f: V \to\, \mathbb{R}$ and a number $\gamma \in \mathbb{R}$ such that
\begin{equation}
f(a)\  <\  \gamma,\  \forall\  a \in A\nonumber
\end{equation}
and
\begin{equation}
f(b)\  >\  \gamma,\  \forall\  b \in B\nonumber.
\end{equation}
In particular, if $A$ is a cone, then
\begin{equation}
f(a)\  \leq\  0,\  \forall\  a \in A\nonumber
\end{equation}
and
\begin{equation}
f(b)\  >\  0,\  \forall\  b \in B\nonumber.
\end{equation}
\end{theorem}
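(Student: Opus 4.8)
The plan is to reduce the general strict-separation statement to the more elementary geometric form of the Hahn--Banach theorem that separates a nonempty \emph{open} convex set from a disjoint convex set---a form that follows in the standard way from the analytic Hahn--Banach extension theorem via Minkowski gauge functionals (see \cite{choquet1969lectures,rudin_functional_1991}). The entire content added by the present hypotheses is the passage from that ``one set open'' version to genuine \emph{strict} separation of the two closed sets $A$ and $B$, and the compactness of $B$ is exactly what makes this passage possible.

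First I would form the difference set $C := B - A = \{b - a : b \in B,\ a \in A\}$. Being a (translate of a) sum of convex sets, $C$ is convex, and since $A \cap B = \emptyset$ we have $0 \notin C$. The key step is to show that $C$ is \emph{closed}. Here I would invoke the standard fact that in a topological vector space the sum of a compact set and a closed set is closed: writing $C = B + (-A)$ with $B$ compact and $-A$ closed, a net argument---pass to a convergent subnet of the $B$-components and use closedness of $-A$ on the residual---shows that any limit of points of $C$ again lies in $C$. This is the one place the compactness hypothesis on $B$ is used, and I expect it to be the main obstacle in the sense that it is the crux on which the whole strict separation turns; the remaining steps are bookkeeping.

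With $C$ closed and $0 \notin C$, its complement is an open neighborhood of $0$, so local convexity of $V$ furnishes a convex open set $U$ with $0 \in U$ and $U \cap C = \emptyset$. Now $U$ is a nonempty open convex set disjoint from the convex set $C$, so the open-set form of Hahn--Banach yields a nonzero continuous linear $g : V \to \mathbb{R}$ and $\lambda \in \mathbb{R}$ with $g(u) < \lambda \le g(c)$ for all $u \in U$ and $c \in C$. Since $0 \in U$ we get $\lambda > 0$, hence $g(c) \ge \lambda > 0$ for every $c \in C$. Unwinding $c = b - a$ gives $g(b) \ge g(a) + \lambda$ for all $a \in A$, $b \in B$. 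Setting $s := \sup_{a \in A} g(a)$, this inequality forces $s$ to be finite and forces $\inf_{b \in B} g(b) \ge s + \lambda$; choosing any $\gamma \in (s, s + \lambda)$ and putting $f := g$ then delivers $f(a) \le s < \gamma$ and $f(b) \ge s + \lambda > \gamma$, which is precisely the asserted strict separation.

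Finally, for the cone addendum I would suppose $A$ is a cone, so that $0 \in A$ and $\alpha a \in A$ whenever $a \in A$ and $\alpha \ge 0$. Applying $f(a) < \gamma$ to $\alpha a$ gives $f(a) < \gamma/\alpha$ for every $\alpha > 0$; letting $\alpha \to \infty$ yields $f(a) \le 0$ for all $a \in A$. Since $0 \in A$ forces $\gamma > f(0) = 0$, the inequality $f(b) > \gamma$ then gives $f(b) > 0$ for all $b \in B$, completing the proof.
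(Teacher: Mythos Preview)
Your argument is correct. The paper, however, does not actually supply a proof of this theorem: it only records a remark pointing to standard references (Choquet, Brezis, Kelley--Namioka) and observes that the cone addendum ``is an easy consequence of the preceding one.'' So there is no paper-proof to compare against beyond that remark.

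What you have written is precisely the standard route found in those references: pass to the difference set $C=B-A$, use compactness of $B$ plus closedness of $-A$ to make $C$ closed, separate $C$ from a convex open neighborhood of $0$ by the open-set form of the geometric Hahn--Banach theorem, and unwind. Your handling of the cone case matches the paper's one-line remark---the scaling argument $f(\alpha a)<\gamma$ with $\alpha\to\infty$ is exactly the ``easy consequence'' the authors allude to. One small point worth making explicit: you assume $0\in A$ as part of the definition of a cone; some authors require only closure under \emph{positive} scalars. In the paper's setting $A$ is also closed and nonempty, so $0\in A$ follows in any case, and your conclusion $\gamma>0$ is safe.
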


\begin{rem} For proofs of this version of the Hahn-Banach theorem see Theorem 21.12 in \cite{choquet1969lectures}, Theorem 1.7 in \cite{brezis2011functional}, or Corollary 14.4 in \cite{kelley1963linear}. The last sentence of Theorem \ref{thm:HahnBanach} is not usually stated explicitly, but it is an easy consequence of the preceding one.
\end{rem}
\medskip

	We are now in a position to prove Theorem \ref{thm:ExistTempEnt}, the central theorem of this article.

\begin{proof}[Proof of Theorem \ref{thm:ExistTempEnt}] To prove that (i) implies (ii) we first note for the Kelvin-Planck theory \theory\  that, in the Hausdorff locally convex topological vector space \VSigma, the closed convex cone $\hat{\scrP}$ is disjoint from the convex compact set $(0,\MSigmaPlOne)$. From the Hahn-Banach theorem, then, there is a continuous linear function $f: \VSigma \to \mathbb{R}$ such that
\begin{equation}\label{eq:FirstHBIneq}
f\,\process\,\leq \,0, \quad \forall\,\process\,\in\,\hat{\scrP}
\end{equation}
and
\begin{equation}\label{eq:SecondHBIneq}
f(0,\scrw)\, > \, 0,  \quad \forall\,(0,\scrw)\,\in\,(0,\MSigmaPlOne).
\end{equation}
Moreover, there are functions $\eta\,(\cdot)$ and $\beta\,(\cdot)$ in $\textnormal{C}(\Sigma,\mathbb{R})$ such that $f(\cdot,\cdot)$ has the representation\footnote{See, for example, \S 3.14 in \cite{rudin_functional_1991}. Although for every continuous linear functional $g$ on \MSigma\  there is a unique continuous function $\varphi \in \textnormal{C}(\Sigma,\mathbb{R})$ such that $g(\mu) = \int_{\Sigma}\varphi\, d \mu,\  \forall \mu \in \MSigma$, the situation for \MSigmaZ, with the topology given earlier, is a little different. In that case, the representing function $\varphi$ is unique only up to an additive constant.}
\begin{equation}\label{eq:HBRep}
f(\scrv,\scrw) = \int_{\Sigma}(-\eta)\,d\scrv + \int_{\Sigma}\beta\,d\scrw,\quad \forall (\scrv,\scrw) \in \VSigma.
\end{equation} 
Note that for each $\sigma \in \Sigma$ the Dirac measure $\delta_{\sigma}$ is a member of \MSigmaPlOne. From this, \eqref{eq:SecondHBIneq}, and \eqref{eq:HBRep} it follows that $\beta(\cdot)$ takes strictly positive values. Letting $T(\cdot) = 1/\beta(\cdot)$, we get \eqref{eq:MainThmCDIneq} as a consequence of \eqref{eq:FirstHBIneq} and \eqref{eq:HBRep}. This completes proof that (i) implies (ii).

	To prove that (ii) implies (i) we first observe that if the inequality \eqref{eq:MainThmCDIneq} is satisfied for a particular $\process \in \scrP$, then the inequality	
is also satisfied by $\alpha\process$ for every non-negative number $\alpha$. For this reason, (ii) implies that the inequality
\begin{equation}\label{eq:MainThmIneq}
\int_{\Sigma}\eta\: d(\scrv)\  \geq\  \int_{\Sigma}\frac{d\scrw}{T}
\end{equation}
is satisfied for all $(\scrv,\scrw)$ in $\Cone(\scrP)$ and therefore  for all  $(\scrv,\scrw)$ in  $\hat{\scrP} := \textrm{cl}\,(\Cone(\scrP)]$. To show that \theory\  is a Kelvin-Planck theory we must show that $\hat{\scrP}$ can contain no member of the form $(0,\scrw)$, where \scrw\ is a nonzero member of \MSigmaPl. Because $T(\cdot)$ is positive-valued, such an element could not satisfy \eqref{eq:MainThmIneq}. This completes the proof of Theorem \ref{thm:ExistTempEnt}.
\end{proof}

\begin{rem}[\emph{Interpretation of (ii)}] 
In Theorem \ref{thm:ExistTempEnt} (ii) we will, of course, regard \eqref{eq:MainThmCDIneq} to be an expression of the Clausius-Duhem inequality, with  $\eta(\cdot)$ and $T(\cdot)$ playing the roles of specific-entropy (entropy per mass) and thermodynamic temperature functions of state  that assign to each $\sigma \in \Sigma$   a specific-entropy $\eta(\sigma)$  and a value $T(\sigma)$ of the thermodynamic temperature. 

If, for a physical process, $\scrm_i$ and $\scrm_f$ are the initial and final conditions of the body suffering the process then, with $\deltam =\scrm_f - \scrm_i$, we have 
\begin{equation}
\label{eq:EntropyDifferenceBody}
\int_{\Sigma}\eta\: d(\deltam) = \int_{\Sigma}\eta\: d \scrm_f -\int_{\Sigma}\eta\: d \scrm_i.
\end{equation}
In view of \eqref{eq:EntropyDifferenceBody} we can interpret the integral on the left side of \eqref{eq:MainThmCDIneq} to be the difference in the entropy of the body suffering the process between the end of the process and its beginning. 

	In this sense, Theorem \ref{thm:ExistTempEnt} tells us that for any Kelvin-Planck theory, there is a notion of the entropy of a \emph{body} (along with a thermodynamic temperature scale) that aligns with the Gibbs version \eqref{eq:CDInGibbs} of the Clausius-Duhem inequality with which we began. \emph{Note, however, that Theorem \ref{thm:ExistTempEnt} does much more, for it provides, in the spirit of modern classical physics, a \emph{local} notion of specific entropy (entropy per mass), as a function of \emph{local} state within a body.}

	If a particular process \process\  derives from the data specified in the example of  Section \ref{subsubsec:ContinuumMechEx}, the inequality in (ii) can be pulled back to a more traditional description of the Clausius-Duhem inequality, in effect an elaboration of the Gibbs version \eqref{eq:CDInGibbs} suited to modern continuum physics:
\begin{equation}
\int_{\scrB}\eta\,(\hat{\sigma}_f(X))\,d\mu(X) - \int_{\scrB}\eta\,(\hat{\sigma}_i(X))\,d\mu(X) \geq \int_{\scrB \times \scrI}\frac{d\,h(X,t)}{T(\hat{\sigma}(X,t))}.
\end{equation}	
Connections of entropy (with existence derived via  \cite{feinberg1986foundations}) to the theory of partial differential equations (in particular the canonical equations of continuum physics) are discussed by L. C. Evans in \cite{evans2004entropy}.
\end{rem}
\bigskip

In preparation for our concluding remarks and for the companion article \cite{feinberg-lavineEntropy2}, we record the following definition:

\begin{definition}[\emph{Entropy, Thermodynamic Temperature}] 
\label{def:CDpair}
Let \theory\ be a Kelvin-Planck theory. An element $(\eta,T)$ of $\textnormal{C}(\Sigma,\mathbb{R}) \times \textnormal{C}(\Sigma,\mathbb{R}_+)$ that satisfies \eqref{eq:MainThmCDIneq} is a \textbf{Clausius-Duhem pair} for the theory. A function $T \in \textnormal{C}(\Sigma,\mathbb{R}_+)$ is a \textbf{Clausius-Duhem temperature scale} for the theory if there exists $\eta \in \textnormal{C}(\Sigma,\mathbb{R})$ such that $(\eta,T)$ is a Clausius-Duhem pair. In that case, $\eta(\cdot)$ is a \textbf{specific-entropy function} for the theory (corresponding to the Clausius-Duhem temperature scale $T(\cdot))$. 
\smallskip
\end{definition} 

\begin{rem}[\emph{Differentiability of the specific-entropy function and the thermodynamic temperature scale}] In applications of the Clausius-Duhem inequality, differentiability of the entropy and temperature with respect to state descriptors often plays a role. Here we focused solely on continuity of these functions. When, for a thermodynamic theory \theory, the state space $\Sigma$ is such that differentiability of real-valued functions on $\Sigma$ has meaning, Theorem \ref{thm:ExistTempEnt} remains true with $\textnormal{C}(\Sigma,\mathbb{R})$ replaced by $\textnormal{C}^{\,k}(\Sigma,\mathbb{R})$, so long as the same replacement is made in the definition of the topology on \MSigma, given in footnote \ref{ft:WeakStarDef}. That revised topology, which is coarser than the weak-star topology, exerts itself  in the definition of $\hat{\scrP} := \textrm{cl}\,(\Cone(\scrP))$. This is discussed more fully, but in a narrower context, in Remark 10.2 of \cite{feinberg1983thermodynamics}. Similar considerations apply to the theorems of the companion article \cite{feinberg-lavineEntropy2}. 
\end{rem}

\medskip
\section{Concluding Remarks} In any thermodynamical theory that complies with the Kelvin-Planck Second Law, as expressed by \eqref{eq:KPSecLawDef}, Theorem \ref{thm:ExistTempEnt} asserts that there are  invariably  specific-entropy and thermodynamic-temperature functions (of the \emph{local} material state) that together satisfy the Clausius-Duhem condition \eqref{eq:MainThmCDIneq}.  Moreover, the two conditions are \emph{equivalent}, so any theory for which there is a Clausius-Duhem entropy-temperature pair \emph{must} comply with the  form   of the Kelvin-Planck Second Law given by \eqref{eq:KPSecLawDef}. 

	Again, the proof that (i) implies (ii) is immediate. It relies only on the Hahn-Banach Theorem and functional analysis infrastructure unavailable to the brilliant founders of classical thermodynamics. It is worth emphasizing again that, with respect to the \emph{existence} of Clausius-Duhem entropy-temperature pairs, there is no reliance on reversible processes or notions of thermodynamic equilibrium. There is no requirement that the set of processes contain certain ones of a specified kind. To some extent this will change in the companion article \cite{feinberg-lavineEntropy2}, where we consider properties (including uniqueness) of specific-entropy and thermodynamic-temperature functions of state, in particular the relation of those properties to the supply of processes.

\renewcommand{\theequation}{A.\arabic{equation}}
 \renewcommand{\thetheorem}{A.\arabic{theorem}}
  \setcounter{theorem}{0}  

\section*{Appendix: The Convexity of  $\hat{\scrP} $}
\addcontentsline{toc}{section}{Appendix: The Convexity of  $\hat{\scrP}$}
\label{app:ConvClConeC}

In the main body of this article, the set  $\scrP\ \subset \VSigma$ carried information about the totality of outcomes admitted by processes within a particular thermodynamical theory.  In this appendix we will argue that, in natural theories, \scrP\ can be expected to have a special structure. In particular,  we will provide support for the presumption in the main text that $\hat{\scrP} : =\textnormal{cl}\,[\textnormal{Cone}\,(\scrP)]$ is convex.

Recall that an element of \scrP, say $\scrp = (\Delta \scrm, \scrq)$,  provides information about the \emph{overall} result of a particular process, with $\Delta \scrm := \scrm(t^f)  -  \scrm(t^i)$ giving the overall difference, from the initial time to the final time, in the condition of the body suffering the process and with \scrq\  giving the process's overall heating measure.  Although the emphasis has been on the overall outcome, a physical process nevertheless evolves over time in the instants between its inception and completion.

With this in mind, we take for granted that each process $\scrp \in \scrP$ can be associated with a physical history experienced by a particular body.\footnote{There might be several histories that different bodies can experience which nevertheless result in the same overall record carried by \scrp. These different histories might have different durations.  Our presumption here is that for each \scrp\ there is at least one such history.} In particular, with \scrp\  we can associate a closed time interval $[t^i_{\scrp},t^f_{\scrp}]$, where $t^i_{\scrp}$ is the initial time at which the process begins and $t^f_{\scrp}$ is the final time at which it ends. The \emph{duration} of the process history is the positive number $t^f_{\scrp} - t^i_{\scrp}$. 

 Moreover, we assume that we can associate, at every instant in $[t_{\scrp}^i,t_{\scrp}^f]$, a specification of  the  difference between the body's current condition and its condition at the process's inception, and that we can also associate a specification of the current (cumulative) heating measure.  That is, with process \mbox{$\scrp\ = (\Delta \scrm, \scrq)$}  we assume that there is a continuous \emph{process history}\footnote{Here is is understood that the functions $\Delta\bar\scrm(\cdot)$ and $\bar\scrq(\cdot)$ are particular to the process history under consideration and that $\Delta\bar\scrm(\tau) = \bar\scrm(\tau) - \bar\scrm(t^i_{\scrp})$, where $\bar\scrm(\tau)$ gives the condition at time $\tau$ of the body suffering the process.} 
 
\begin{equation}
\tau \in [t^i_{\scrp},t^f_{\scrp}] \to (\Delta\bar{\scrm}(\tau),\bar{\scrq}(\tau)) \in \VSigma
\end{equation}

\medskip\noindent
such that   $\bar{\scrq}(t^i_{\scrp}) =0$, $\bar{\scrq}(t^f_{\scrp}) = \scrq$, $\Delta\bar{\scrm}(t^i_{\scrp}) = 0$, and  $\Delta\bar{\scrm}(t^f_{\scrp}) = \Delta\scrm$.

\smallskip

 \begin{rem} \label{subproc} Given the process history described above, we take for granted that, for each closed time interval contained within $[t^i_{\scrp},t^f_{\scrp}]$, there is another member of \scrP, say  $\scrp^* = (\Delta \scrm^*, \scrq^*)$,  corresponding to the restriction of the given process history to that smaller time interval. That is, if $[t^i_{\scrp^*},t^f_{\scrp^*}]$ is the smaller time interval, then 
\begin{equation}
\Delta \scrm^*  = \Delta \bar{\scrm}(t^f_{\scrp^*}) - \Delta \bar{\scrm}(t^i_{\scrp^*})\quad \textrm{and}\quad \scrq^* = \bar{\scrq}(t^f_{\scrp^*}) - \bar{\scrq}(t^i_{\scrp^*}).
\end{equation}
 \end{rem}
 
 \bigskip

	With this as background, what follows is a brief list of properties we assume to be possessed by the set of process histories associated with \scrP. Each property will be accompanied by a rationale. Taken together, these properties will shed light on the geometric structure of \scrP.
	
\begin{property} If $\scrp_1$ and $\scrp_2$ are members of \scrP\  associated with process histories of identical duration, then $\scrp_1 + \scrp_2$ is also a member of \scrP\ having associated with it a history of that same duration. \label{Prop1}
\end{property}

\begin{rationale} If the two processes $\scrp_1 = (\Delta \scrm_1, \scrq_1)$ and $\scrp_2 = (\Delta \scrm_2, \scrq_2)$ are experienced by bodies $\scrB_1$\ and $\scrB_2$ then those same processes can be run simultaneously with copies of $\scrB_1$\ and $\scrB_2$ at remote locations (or, more generally, thermally insulated from each other). The union of the bodies is again a body. The process experienced by the union will have $\Delta \scrm_1 + \Delta \scrm_2$ as the body's change of condition and $\scrq_1 + \scrq_2$ as its heating measure. Thus, $\scrp_1 + \scrp_2$ is a member of \scrP.   
\end{rationale}

\begin{rem} \label{remNscrp} If \scrp\  is a member of \scrP\ there is, by supposition, a process history associated with it. It is a consequence of Property \ref{Prop1} (and its rationale) that, for any integer $n$, $n\scrp$ is also a member of \scrP.
\end{rem}

\begin{property} If $\scrp =(\Delta \scrm, \scrq) \in \scrP$ has associated with it a process history of duration $d$ then, for any integer $N$, \scrp\ also has associated with it a history of duration $d/N$.\label{Prop2}
\end{property}

\begin{rationale}
  The time interval for the given process history can be regarded to be the union of N sequential (closed) time intervals, each of duration $d/N$. With each such smaller interval we can associate, as in Remark \ref{subproc}, a sub-process history. Using $N$ copies of the original body suffering the process, we can execute those $N$ sub-process histories simultaneously, as in the rationale for Property \ref{Prop1}. The union of the $N$ body-copies is again a body, this one suffering a process of duration $d/N$. By virtue of Property \ref{Prop1} the overall change of condition will again be $\Delta \scrm$ and the overall heating measure will again be \scrq.
\end{rationale}

\begin{property}  Suppose that two members of \scrP, say $\scrp_1$ and $\scrp_2$, are associated with histories of durations $d_1$ and $d_2$. If $d_1/d_2$ is rational, then $\scrp_1+\scrp_2$ is also a member of \scrP. \label{Prop3}
\end{property}

\begin{rationale} This is just a consequence of Properties \ref{Prop1} and \ref{Prop2}: Suppose that $d_1/d_2 = N_1/N_2$, where $N_1$ and $N_2$ are integers. Then, from Property \ref{Prop2}, $\scrp_1$ and $\scrp_2$ can be associated with two histories of identical duration, $d_1/N_1 = d_2/N_2$. From Property \ref{Prop1} it follows that $\scrp_1+\scrp_2$ is a member of \scrP.
\end{rationale}
\medskip

We will not assume that we can always associate with $\scrp \in \scrP$ a process history of rational duration. Nevertheless, we will assume that there is invariably a nearby element of \scrP\  that can be associated with a rational-duration process history. This is made precise in the following way:

\begin{property}\label{Prop4} If \scrp\ is an element of \scrP\ and $\scrO \subset \VSigma$ is any open neighborhood of \scrp, there is in \scrO\ an element of \scrP\  that can be associated with a process history of rational duration.
\end{property}

\begin{rationale} This is just a consequence of the  natural assumption that \scrp\  can be associated with a process history that is continuous in time. In fact, for Property \ref{Prop4} to obtain it is sufficient that \scrp\ can be associated with a process history that is merely continuous at its final time.
\end{rationale}

\bigskip

	In the following proposition we assume that, for the thermodynamic theory $(\Sigma, \scrP)$ under consideration, the set \scrP\  and the set of histories associated with its elements possess Properties \ref{Prop1}-\ref{Prop4}. 
	
\begin{proposition}\label{PropConvexity}  $\hat{\scrP} : =\textnormal{cl}\,[\textnormal{Cone}\,(\scrP)]$ is a convex subset of \VSigma.
\end{proposition}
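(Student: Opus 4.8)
The plan is to reduce the assertion to closure under addition and then to feed Properties~\ref{Prop1}--\ref{Prop4} into a single approximation argument. First I would observe that $\hat{\scrP}=\textrm{cl}\,[\Cone(\scrP)]$ is automatically a cone: in a topological vector space the closure of a cone is again a cone, since for fixed $\lambda>0$ multiplication by $\lambda$ is a homeomorphism fixing $\Cone(\scrP)$, and $0=0\cdot\scrp\in\Cone(\scrP)\subseteq\hat{\scrP}$ whenever $\scrP\neq\emptyset$ (the case $\scrP=\emptyset$ being vacuous). A cone $C$ is convex if and only if $C+C\subseteq C$ (then $tx+(1-t)y\in C$ for $x,y\in C$, $t\in[0,1]$), so it is enough to show $\hat{\scrP}+\hat{\scrP}\subseteq\hat{\scrP}$. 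Because addition on \VSigma\ is continuous, one has $\textrm{cl}\,(A)+\textrm{cl}\,(B)\subseteq\textrm{cl}\,(A+B)$ for all subsets $A,B$; taking $A=B=\Cone(\scrP)$ reduces the whole task to the purely algebraic inclusion
\[
\Cone(\scrP)+\Cone(\scrP)\subseteq\hat{\scrP}.
\]

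To prove this I would fix $\scrp_1,\scrp_2\in\scrP$ and $\alpha_1,\alpha_2\ge 0$ and show $\alpha_1\scrp_1+\alpha_2\scrp_2\in\textrm{cl}\,[\Cone(\scrP)]$; if some $\alpha_i=0$ the sum already lies in $\Cone(\scrP)$, so assume $\alpha_1,\alpha_2>0$. Let \scrO\ be an arbitrary open neighborhood of $x:=\alpha_1\scrp_1+\alpha_2\scrp_2$; the goal is to put a point of $\Cone(\scrP)$ into \scrO. Using joint continuity of $(\beta_1,\beta_2,\scrw_1,\scrw_2)\mapsto\beta_1\scrw_1+\beta_2\scrw_2$ on $\R^2\times\VSigma^2$, I would pick $\varepsilon>0$ and open neighborhoods $\scrO_1\ni\scrp_1$, $\scrO_2\ni\scrp_2$ such that $\beta_1\scrw_1+\beta_2\scrw_2\in\scrO$ whenever $|\beta_i-\alpha_i|<\varepsilon$ and $\scrw_i\in\scrO_i$. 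Now comes the double approximation: by Property~\ref{Prop4} choose $\scrp_i^{\ast}\in\scrO_i\cap\scrP$ admitting a process history of \emph{rational} duration $d_i$, and choose positive rationals $r_i=m_i/n_i$ (with $m_i,n_i$ positive integers) with $|r_i-\alpha_i|<\varepsilon$. Then
\[
r_1\scrp_1^{\ast}+r_2\scrp_2^{\ast}=\tfrac{1}{n_1 n_2}\bigl((m_1 n_2)\scrp_1^{\ast}+(m_2 n_1)\scrp_2^{\ast}\bigr).
\]
By Remark~\ref{remNscrp}, $(m_1 n_2)\scrp_1^{\ast}\in\scrP$ and carries a history of duration $d_1$, and likewise $(m_2 n_1)\scrp_2^{\ast}\in\scrP$ with a history of duration $d_2$; since $d_1/d_2$ is a quotient of positive rationals, Property~\ref{Prop3} gives $(m_1 n_2)\scrp_1^{\ast}+(m_2 n_1)\scrp_2^{\ast}\in\scrP$, whence $r_1\scrp_1^{\ast}+r_2\scrp_2^{\ast}\in\Cone(\scrP)$. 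By construction this element also lies in \scrO, and since \scrO\ was arbitrary we conclude $x\in\textrm{cl}\,[\Cone(\scrP)]=\hat{\scrP}$, finishing the argument.

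The main obstacle I anticipate is precisely that neither the durations of $\scrp_1,\scrp_2$ nor the coefficients $\alpha_1,\alpha_2$ may be rational, so Property~\ref{Prop3} (which requires a rational duration ratio) and the integer-scaling Remark~\ref{remNscrp} cannot be applied to $\alpha_1\scrp_1+\alpha_2\scrp_2$ directly; the resolution is the simultaneous perturbation above, moving $\scrp_i$ within \scrO\ to rational-duration processes via Property~\ref{Prop4} and moving $\alpha_i$ to positive rationals, after which clearing denominators exhibits the perturbed combination as a nonnegative scalar times a sum of integer multiples of processes with rationally related durations. Everything else---the reduction of cone convexity to closure under addition, and interchanging closures with continuous addition---is routine topological-vector-space bookkeeping.
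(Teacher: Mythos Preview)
Your proof is correct and follows essentially the same route as the paper's: reduce convexity of the closed cone to closure under addition, then use continuity of the vector-space operations to perturb simultaneously the scalar coefficients to positive rationals and the processes (via Property~\ref{Prop4}) to ones with rational-duration histories, clear denominators, and apply Remark~\ref{remNscrp} together with Property~\ref{Prop3} to land in $\Cone(\scrP)$. The only organizational difference is that you first invoke $\textrm{cl}(A)+\textrm{cl}(B)\subseteq\textrm{cl}(A+B)$ to reduce to sums of elements already in $\Cone(\scrP)$, whereas the paper works directly with $v_1,v_2\in\hat{\scrP}$ and picks nearby $\alpha_i\scrp_i\in\Cone(\scrP)$ inside the neighborhoods $\scrO_i$; this is a cosmetic rearrangement of the same argument.
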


\begin{proof} We need to show that if $x_1$ and $x_2$ are members of $\hat{\scrP}$ and $\alpha$ is a number between $0$ and $1$, then $\alpha x_1 + (1-\alpha)x_2$ is also a member of $\hat{\scrP}$. It is not difficult to show that, in a topological vector space, the closure of a cone is again a cone, whereupon $\hat{\scrP}$ is a cone. In particular $\alpha x_1$ and  $(1-\alpha)x_2$ are members of $\hat{\scrP}$.    Therefore, to establish that $\hat{\scrP}$ is convex, it is enough to prove the following lemma:

\begin{lemma} If $v_1$ and $v_2$ are members of $\hat{\scrP}$  then so is $v_1 + v_2$.
\end{lemma}

\begin{proof} Our aim is to show that if, in the topological vector space $\VSigma$, \scrO\  is an open neighborhood of $v_1 + v_2$, then \scrO\ contains an element of \coneP.
 
 Because vector addition $\VSigma \times \VSigma \to \VSigma$  is continuous, there are open sets $\scrO_{\,1}$ and $\scrO_{\,2}$ in \VSigma\  containing $v_1$ and $v_2$ respectively such that the set
\begin{equation}
\scrO_{\,1} + \scrO_{\,2} := \{w_1 + w_2 \in \VSigma\,: w_1 \in  \scrO_{\,1}, w_2 \in  \scrO_{\,2}\}
\end{equation} 
is contained in \scrO.  

By supposition, $v_1$ is a member of \textnormal{cl\,[\,cone\,(\scrP)\,]}. Because $\scrO_{\,1}$ is an open neighborhood of $v_1$, there must be a member of $\Cone(\scrP)$ in  $\scrO_{\,1}$. That is,  $\scrO_{\,1}$ must contain a member of the form $\alpha_1\scrp_1$, with $\alpha_1$ a positive number and $\scrp_1$\  a member of \scrP. Because, in the topological vector space \VSigma, scalar multiplication $\mathbb{R} \times \VSigma \to \VSigma$  is continuous, there is an open interval $I_1 \subset \RP$ containing $\alpha_1$ and an open neighborhood $\hat{\scrO}_1$ of $\scrp_1$ such that the set 

\begin{equation}
I_1 \cdot \hat{\scrO}_1 := \{\theta_1 w_1 \in \VSigma : \   \theta_1  \in I_1,  w_1 \in \hat{\scrO}_{\,1}, \}
\end{equation} 
is contained in $\scrO_1$. 

In particular, there is a \emph{rational} number $\alpha^*_1 \in I_1$ and, from Property \ref{Prop4}, an element $\scrp^*_1 \in \hat{\scrO_1}$  associated with a process history of \emph{rational} duration such that $\alpha^*_1\scrp^*_1$ is a member of $\scrO_1$. Similarly, $\scrO_2$ contains a member of the form $\alpha^*_2\scrp^*_2$, where $\alpha^*_2$ is rational and $\scrp^*_2 \in \scrP$ has associated with it a process history of rational duration. Because $\scrO_{\,1} + \scrO_{\,2}$ is contained in \scrO, we have the inclusion 
\begin{equation}
\alpha^*_1\scrp^*_1 + \alpha^*_2\scrp^*_2 \in \scrO.
\end{equation}
It remains to be shown that $\alpha^*_1\scrp^*_1 + \alpha^*_2\scrp^*_2$  is  a member of $\Cone(\scrP)$.

	Let $n_1, m_1, n_2, m_2$ be integers such that $\alpha_1^* = n_1/ m_1$ and $\alpha_2^* = n_2/ m_2$. Thus, we have the inclusion
\begin{equation}
\frac{1}{m_1m_2}(n_1m_2\scrp^*_1 + n_2m_1\scrp^*_2) \in \scrO.
\end{equation}
From Remark \ref{remNscrp} it follows that $n_1m_2\scrp^*_1$ and $n_2m_1\scrp^*_2$ are 
members of \scrP\ having associated with them individual process histories of rational durations (identical to those associated with $\scrp^*_1$ 
and $\scrp^*_2$, respectively). From Property \ref{Prop3}, then, their sum $\scrp^{**} := n_1m_2\scrp^*_1 + n_2m_1\scrp^*_2$ is a member of  \scrP, so we have the inclusion 
\begin{equation}
\frac{1}{m_1m_2}\scrp^{**} \in \scrO.
\end{equation}	
Thus, there is  a member of \coneP\  that lies in \scrO. This is what we wanted to prove.
\end{proof} 
This completes the proof of Proposition \ref{PropConvexity}.
\end{proof}

\bibliographystyle{spmpsci}
\bibliography{MonoLibrary-Thermo}

\end{document}